
\documentclass{llncs}
\usepackage{pta}

\pagestyle{plain}
\raggedbottom

\title{Probabilistic Thread Algebra}
\author{J.A. Bergstra \and C.A. Middelburg}
\institute{Informatics Institute, Faculty of Science, University of
           Amsterdam, \\
           Science Park~904, 1098~XH Amsterdam, the Netherlands \\
           \email{J.A.Bergstra@uva.nl,C.A.Middelburg@uva.nl}}

\begin{document}
\maketitle

\begin{abstract}
We add probabilistic features to basic thread algebra and its extensions 
with thread-service interaction and strategic interleaving.
Here, threads represent the behaviours produced by instruction sequences 
under execution and services represent the behaviours exhibited by the 
components of execution environments of instruction sequences.
In a paper concerned with probabilistic instruction sequences, we 
proposed several kinds of probabilistic instructions and gave an 
informal explanation for each of them.
The probabilistic features added to the extension of basic thread 
algebra with thread-service interaction make it possible to give a 
formal explanation in terms of non-probabilistic instructions and 
probabilistic services.
The probabilistic features added to the extensions of basic thread 
algebra with strategic interleaving make it possible to cover strategies
corresponding to probabilistic scheduling algorithms.
\begin{keywords} 
basic thread algebra, probabilistic thread, probabilistic service,
probabilistic interleaving strategy, probabilistic instruction.
\end{keywords}%
\begin{classcode}
D.3.3, D.4.1, F.1.1, F.1.2.
\end{classcode}
\end{abstract}

\section{Introduction}
\label{sect-intro}

In~\cite{BL02a}, an approach to the semantics of programming languages 
was presented which is based on the perspective that a program is in 
essence an instruction sequence.
The groundwork for the approach is formed by \PGA\ (ProGram Algebra), 
an algebraic theory of single-pass instruction sequences, and \BTA\ 
(Basic Thread Algebra), an algebraic theory of mathematical objects that 
represent the behaviours produced by instruction sequences under 
execution (for a comprehensive introduction to these algebraic theories, 
see~\cite{BM12b}).
To increase the applicability of the approach, \BTA\ was extended with
thread-service interaction in~\cite{BP02a}.
In the setting of \BTA\ and its extension with thread-service 
interaction, threads are mathematical objects that represent the 
behaviours produced by instruction sequences under execution and 
services are mathematical objects that represent the behaviours 
exhibited by components of execution environments of instruction 
sequences.

As a continuation of the work presented in~\cite{BL02a,BP02a},
(a)~the notion of an instruction sequence was subjected to systematic
and precise analysis using the groundwork laid earlier,
(b)~various issues, including issues relating to computability and 
complexity of computational problems, efficiency of algorithms, and
verification of programs, were rigorously investigated thinking in terms 
of instruction sequences (for a comprehensive survey of a large part of 
the work referred to under~(a) and~(b), see~\cite{BM12b}), and 
(c)~the form of interleaving concurrency that is relevant to the 
behaviours of multi-threaded programs under execution, called strategic
interleaving in the setting of \BTA, was rigorously investigated by 
means of extensions of \BTA\ (see e.g.~\cite{BM04c,BM06a,BM07a}). 

In the course of the work referred to above under~(b), we ran into the 
problem that \BTA\ and its extension with thread-service interaction 
do not allow issues relating to probabilistic computation to be 
investigated thinking in terms of instruction sequences.  
In the course of the work referred to above under~(c), we ran into the 
problem that \BTA\ also does not allow probabilistic strategic 
interleaving to be investigated by means of extensions of \BTA.
This paper concerns the addition of features to \BTA\ and its extensions 
with thread-service interaction and strategic interleaving that will 
take away these limitations.

We consider it important to take probabilistic computation into account
in future investigations.
The primary reasons for this are the following:
(a)~the existence of probabilistic algorithms that are highly efficient, 
possibly at the cost of a probability of correctness less than one 
(e.g.\ primality testing, see~\cite{Rab80a});
(b)~the existence of probabilistic algorithms for which no deterministic 
counterparts exist (e.g.\ symmetry breaking, see~\cite{IR90a});
(c)~the gradually created evidence for the hypothesis that it is 
relevant for a diversity of issues in computer science and engineering 
to think in terms of instruction sequences.
This constitutes the basis of our motivation for the work presented in 
this paper. 

In~\cite{BM09f}, we gave an enumeration of kinds of probabilistic 
instructions that were chosen on the basis of direct intuitions and 
therefore not necessarily the best kinds in any sense.
We only gave an informal explanation for each of the enumerated kinds
because we considered it premature at the time to add probabilistic 
features to \BTA\ that would make it possible to give a formal 
explanation.
We were doubtful whether the ad hoc addition of features to \BTA\ was 
the right way to go.

Later, we have found that the ramification of semantic options with 
the addition of probabilistic features to \BTA\ is well surveyable
because of 
(a)~the limitation of the scope to behaviours produced by instruction
sequences under execution and
(b)~the semantic constraints brought about by the informal explanations 
of the kinds of probabilistic instructions enumerated in~\cite{BM09f} 
and the desired elimination property of all but one kind.
In the case of a general process algebra, such as \ACP~\cite{BW90}, 
CCS~\cite{Mil89} or CSP~\cite{Hoa85}, the ramification becomes much more 
complex, particularly because a limitation of the scope to behaviours of 
a special kind is lacking.
In this paper, we add probabilistic features to \BTA\ and an extension 
of \BTA\ with thread-service interaction.

The probabilistic features added to the extension of \BTA\ with 
thread-service interaction make it possible to give a formal explanation 
for each of the kinds of probabilistic instructions enumerated 
in~\cite{BM09f} in terms of non-probabilistic instructions and 
probabilistic services.
To demonstrate this, we add the kind of probabilistic instructions that 
cannot be eliminated to \PGLB\ (ProGramming Language B), a program 
notation rooted in \PGA\ and close to existing assembly languages, and 
give a formal definition of the behaviours produced by the instruction 
sequences from the resulting program notation.
We opted for \PGLB\ because in the past it has proved itself suitable 
for the investigation of various issues.
The added kind of probabilistic instructions allow probabilistic choices 
to be made during the execution of instruction sequences. 

In~\cite{BM04c} and subsequent papers, we extended \BTA\ with kinds of 
interleaving where interleaving takes place according to some 
deterministic interleaving strategy.
Interleaving strategies are abstractions of scheduling algorithms.
Interleaving according to an interleaving strategy differs from 
arbitrary interleaving, but it is what really happens in the case of 
multi-threading as found in programming languages such as 
Java~\cite{GJSB00a} and C\#~\cite{HWG03a}.
The extension of \BTA\ with a probabilistic feature does not only allow
of probabilistic services, but also allows of probabilistic interleaving
strategies.
In this paper, we also generalize the extensions of \BTA\ with specific 
kinds of deterministic strategic interleaving to an extension for a 
large class of kinds of deterministic and probabilistic strategic 
interleaving.
Thus, strategies corresponding to probabilistic scheduling algorithms
such as the lottery scheduling algorithm~\cite{WW94a} are covered.

The main results of this paper are probabilistic versions of \BTA\ and
its extensions with thread-service interaction and strategic 
interleaving which pave the way for  
(a)~investigation of issues related to probabilistic computation 
thinking in terms of instruction sequences and 
(b)~investigation of probabilistic interleaving strategies.

In this paper, we take functions whose range is the carrier of a signed 
cancellation meadow as probability measures.
In~\cite{BT07a}, meadows are proposed as alternatives for fields with a
purely equational axiomatization.
A meadow is a commutative ring with a multiplicative identity element
and a total multiplicative inverse operation satisfying two equations
which imply that the multiplicative inverse of zero is zero.
A cancellation meadow is a field whose multiplicative inverse operation 
is made total by imposing that the multiplicative inverse of zero is 
zero, and a signed cancellation meadow is a cancellation meadow expanded 
with a signum operation.
In~\cite{BP13a}, Kolmogorov's probability axioms for finitely additive 
probability spaces are rephrased for the case where probability measures 
are functions whose range is the carrier of a signed cancellation 
meadow.

This paper is organized as follows.
First, we review signed cancellation meadows 
(Section~\ref{sect-meadows}). 
Next, we add probabilistic features to \BTA\ and an extension of \BTA\ 
with thread-service interaction (Sections~\ref{sect-prBTA}
and~\ref{sect-TSI}).
Then, we add a kind of probabilistic instructions to \PGLB\
(Section~\ref{sect-prPGLB}).
Following this, we add probabilistic features to the extensions of \BTA\ 
with strategic interleaving 
(Section~\ref{sect-strategic-interleaving}).
Finally, we make some concluding remarks (Section~\ref{sect-concl}).

It should be mentioned that \BTA\ is introduced in~\cite{BL02a} under 
the name BPPA (Basic Polarized Process Algebra) and services are called
state machines in~\cite{BP02a}.

\section{Signed Cancellation Meadows}
\label{sect-meadows}

We will take functions whose range is the carrier of a signed 
cancellation meadow as probability measures.
Therefore, we review signed cancellation meadows in this section.

In~\cite{BT07a}, meadows are proposed as alternatives for fields with a
purely equational axiomatization.
A meadow is a commutative ring with a multiplicative identity element
and a total multiplicative inverse operation satisfying two equations
which imply that the multiplicative inverse of zero is zero.
Thus, all meadows are total algebras and the class of all meadows is a 
variety.
At the basis of meadows lies the decision to make the multiplicative 
inverse operation total by imposing that the multiplicative inverse of 
zero is zero.
All fields in which the multiplicative inverse of zero is zero, called
zero-totalized fields, are meadows, but not conversely.

A cancellation meadow is a meadow that satisfies the 
\emph{cancellation axiom} 
$ x \neq 0 \Land x \mul y = x \mul z \Limpl y = z$.
The zero-totalized fields are exactly the cancellation meadows that
satisfy in addition the \emph{separation axiom} $0 \neq 1$.
A paradigmatic example of cancellation meadows is the field of rational
numbers with the multiplicative inverse operation made total by imposing
that the multiplicative inverse of zero is zero (see e.g.~\cite{BT07a}).
An example of a meadow that is not a zero-totalized field is the initial
algebra of the equational axiomatization of meadows 
(see e.g.~\cite{BHT09a}).

A signed cancellation meadow is a cancellation meadow expanded with a
signum operation.
The usefulness of the signum operation lies in the fact that the 
predicates $<$ and $\leq$ can be defined using this operation 
(see below).

The signature of signed cancellation meadows consists of the following
constants and operators:
the constants $0$ and $1$,
the binary \emph{addition} operator ${} +$ {},
the binary \emph{multiplication} operator ${} \mul {}$,
the unary \emph{additive inverse} operator $- {}$,
the unary \emph{multiplicative inverse} operator ${}\minv$, and
the unary \emph{signum} operator $\sign$.

Terms are built as usual.
We use infix notation for the binary operators ${} + {}$ and
${} \mul {}$, prefix notation for the unary operator $- {}$, and postfix
notation for the unary operator ${}\minv$.
We use the usual precedence convention to reduce the need for
parentheses.
We introduce subtraction and division as abbreviations:
$t - t'$ abbreviates $t + (-t')$ and
$t / t'$ abbreviates $t \mul ({t'}\minv)$.

The constants and operators from the signature of signed cancellation
meadows are adopted from rational arithmetic, which gives an appropriate
intuition about these constants and operators.

Signed cancellation meadows are axiomatized by the equations in 
Tables~\ref{eqns-meadow} and~\ref{eqns-signum} and the above-mentioned
cancellation axiom.
\begin{table}[!t]
\caption
{Axioms of a meadow}
\label{eqns-meadow}
\begin{eqntbl}
\begin{eqncol}
(x + y) + z = x + (y + z)                                             \\
x + y = y + x                                                         \\
x + 0 = x                                                             \\
x + (-x) = 0
\end{eqncol}
\qquad\quad
\begin{eqncol}
(x \mul y) \mul z = x \mul (y \mul z)                                 \\
x \mul y = y \mul x                                                   \\
x \mul 1 = x                                                          \\
x \mul (y + z) = x \mul y + x \mul z
\end{eqncol}
\qquad\quad
\begin{eqncol}
(x\minv)\minv = x                                                     \\
x \mul (x \mul x\minv) = x                           
\end{eqncol}
\end{eqntbl}
\end{table}
\begin{table}[!t]
\caption{Additional axioms for the signum operator}
\label{eqns-signum}
\begin{eqntbl}
\begin{eqncol}
\sign(x / x) = x / x                                                  \\
\sign(1 - x / x) = 1 - x / x                                          \\
\sign(-1) = -1
\end{eqncol}
\qquad\quad
\begin{eqncol}
\sign(x\minv) = \sign(x)                                              \\
\sign(x \mul y) = \sign(x) \mul \sign(y)                              \\
(1 - \frac{\sign(x) - \sign(y)}{\sign(x) - \sign(y)}) \mul
(\sign(x + y) - \sign(x)) = 0
\end{eqncol}
\end{eqntbl}
\end{table}
The axioms for the signum operator stem from~\cite{BBP13a}.

The predicates $<$ and $\leq$ are defined in signed cancellation meadows 
as follows: 
$x < y \Liff \sign(y - x) = 1$ and
$x \leq y \Liff \sign(\sign(y - x) + 1) = 1$.
Because $\sign(\sign(y - x) + 1) \neq -1$, we have
$0 \leq x \leq 1 \Liff 
 \sign(\sign(x) + 1) \mul \sign(\sign(1 - x) + 1) = 1$.
We will use this equivalence below to describe the set of probabilities.

\section{Probabilistic Basic Thread Algebra}
\label{sect-prBTA}

In this section, we introduce \prBTA\ 
(probabilistic Basic Thread Algebra), a probabilistic version of \BTA.
The objects considered in \BTA\ are called threads.
In \BTA, a thread represents a behaviour which consists of performing 
actions in a deterministic sequential fashion.
Upon each action performed, a reply from an execution environment 
determines how the thread proceeds.
The possible replies are the values $\True$ and $\False$.
In \prBTA, a thread represents a behaviour which consists of performing 
actions in a probabilistic sequential fashion.
That is, performing actions may alternate with making internal choices 
according to discrete probability distributions.

In the sequel, it is assumed that a fixed but arbitrary signed 
cancellation meadow $\fM$ has been given.
We denote the carrier of $\fM$ by $\fM$ as well, and we denote the 
interpretations of the constants and operators in $\fM$ by the constants 
and operators themselves.
We write $\Prob$ for the set
$\set{\pi \in \fM \where
 \sign(\sign(\pi) + 1) \mul \sign(\sign(1 - \pi) + 1) = 1}$ 
of probabilities.

In \prBTA, it is moreover assumed that a fixed but arbitrary set $\BAct$ 
of \emph{basic actions}, with $\Tau \not\in \BAct$, has been given.
In addition, there is the special action $\Tau$.
Performing $\Tau$, which is considered performing an internal action,
will always lead to the reply $\True$.
We write $\BActTau$ for $\BAct \union \set{\Tau}$ and refer to the 
members of $\BActTau$ as basic actions.

The algebraic theory \prBTA\ has one sort: the sort $\Thr$ of 
\emph{threads}. 
We make this sort explicit to anticipate the need for many-sortedness
later on.
To build terms of sort $\Thr$, \prBTA\ has the following constants and 
operators:
\begin{itemize}
\item
the \emph{inaction} constant $\const{\DeadEnd}{\Thr}$;%
\footnote
{In earlier work, the inaction constant is sometimes called the deadlock
 constant.}
\item
the \emph{termination} constant $\const{\Stop}{\Thr}$;
\item
for each $a \in \BActTau$, the binary \emph{postconditional composition} 
operator $\funct{\pcc{\ph}{a}{\ph}}{\linebreak[2]\Thr \x \Thr}{\Thr}$;
\item
for each $\pi \in \Prob$, the binary \emph{probabilistic composition} 
operator $\funct{\prc{\ph}{\pi}{\ph}}{\Thr \x \Thr}{\Thr}$.
\end{itemize}
Terms of sort $\Thr$ are built as usual in the one-sorted case. 
We assume that there are infinitely many variables of sort $\Thr$, 
including $x,y,z$.
We use infix notation for postconditional composition and probabilistic
composition. 
We introduce \emph{basic action prefixing} as an abbreviation: 
$a \bapf t$, where $t$ is a \prBTA\ term, abbreviates 
$\pcc{t}{a}{t}$.
We identify expressions of the form $a \bapf t$ with the \prBTA\
terms they stand for.

The thread denoted by a closed term of the form $\pcc{t}{a}{t'}$
will first perform $a$, and then proceed as the thread denoted by
$t$ if the reply from the execution environment is $\True$ and proceed
as the thread denoted by $t'$ if the reply from the execution
environment is $\False$. 
The thread denoted by a closed term of the form $\prc{t}{\pi}{t'}$
will behave like the thread denoted by $t$ with probability $\pi$ and 
like the thread denoted by $t'$ with probability $1 - \pi$.
The thread denoted by $\Stop$ will do no more than terminate and 
the thread denoted by $\DeadEnd$ will become inactive.
A thread becomes inactive if no more basic actions are performed, but
it does not terminate.

The inaction constant, the termination constant and the postconditional 
composition operators are adopted from \BTA.
Counterparts of the probabilistic composition operators are found in 
most probabilistic process algebras that offer probabilistic choices of 
the generative variety (see e.g.~\cite{BBS95a}).

The axioms of \prBTA\ are given in Table~\ref{axioms-prBTA}.%
\begin{table}[!t]
\caption{Axioms of \prBTA}
\label{axioms-prBTA}
\begin{eqntbl}
\begin{axcol}
\pcc{x}{\Tau}{y} = \pcc{x}{\Tau}{x}          & \axiom{\phantom{pr}T1}
\eqnsep
\prc{x}{\pi}{y} = \prc{y}{1{-}\pi}{x}                  & \axiom{prA1} \\
\prc{x}{\pi}{(\prc{y}{\rho}{z})} = 
\prc{(\prc{x}{\frac{\pi}{\pi{+}\rho{-}\pi{\mul}\rho}}{y})}
    {\pi{+}\rho{-}\pi{\mul}\rho}{z}                    & \axiom{prA2} \\
\prc{x}{\pi}{x} = x                                    & \axiom{prA3} \\
\prc{x}{1}{y} = x                                      & \axiom{prA4} 
\end{axcol}
\end{eqntbl}
\end{table}
In this table, $\pi$ and $\rho$ stand for arbitrary probabilities from 
$\Prob$.
Axiom T1 reflects that performing $\Tau$ will always lead to the reply
$\True$ and axioms prA1--prA4 express that probabilistic composition  
provides probabilistic choices of the generative variety 
(see~\cite{GSS95a}).
From prA1 and prA4, we can derive both
$\prc{x}{0}{(\prc{y}{0}{z})} = z$ and $\prc{(\prc{x}{0}{y})}{0}{z} = z$,
and hence also 
$\prc{x}{0}{(\prc{y}{0}{z})} = \prc{(\prc{x}{0}{y})}{0}{z}$.
This last equation can be immediately derived from prA2 as well because 
in meadows $0 / 0 = 0$.

Axiom T1 is adopted from \BTA.
Counterparts of axioms prA1--prA3 are found in most probabilistic 
process algebras that offer probabilistic choices of the generative 
variety (see e.g.~\cite{BBS95a}).
However, in the process algebras concerned the probabilities $0$ and $1$ 
are excluded in probabilistic choices to prevent division by zero.
Owing to this exclusion, axiom prA4 is lacking in these process algebras.

Each closed \prBTA\ term denotes a finite thread, i.e.\ a thread with a
finite upper bound to the number of basic actions that it can perform.
Infinite threads, i.e.\ threads without a finite upper bound to the
number of basic actions that it can perform, can be described by guarded 
recursion.
A \emph{guarded recursive specification} over \prBTA\ is a set of 
recursion equations $E = \set{X = t_X \where X \in V}$, where $V$ is a 
set of variables of sort $\Thr$ and each $t_X$ is a \prBTA\ term in which 
only variables from $V$ occur and each occurrence of a variable in $t_X$ 
is in a subterm of the form $\pcc{t}{a}{t'}$.
We write $\vars(E)$ for the set of all variables that occur on the
left-hand side of an equation in $E$.

We are only interested in models of \prBTA\ in which guarded recursive
specifications have unique solutions.
A model of \prBTA\ in which guarded recursive specifications have unique 
solutions is the projective limit model of \prBTA.
This model is constructed along the same line as the projective limit 
model of \BTA\ presented in~\cite{BM12b}.
It is based on the view that two threads are identical if their 
approximations up to any finite depth are identical. 
The approximation up to depth $n$ of a thread is obtained by cutting it 
off after it has performed $n$ actions if it has not yet terminated or
become inactive.

We confine ourselves to the projective limit model of \prBTA, which has 
an initial model of \prBTA\ as a submodel, for the interpretation of 
\prBTA\ terms. 
An outline of this model is given in Appendix~\ref{app-prBTA}.
In the sequel, we use the term \emph{probabilistic thread} or simply 
\emph{thread} for the elements of the carrier of the model.
Regular threads, i.e.\ finite or infinite threads that can only be in a 
finite number of states, can be defined by means of a finite guarded 
recursive specification.

We extend \prBTA\ with guarded recursion by adding constants for 
solutions of guarded recursive specifications and axioms concerning 
these additional constants.
For each guarded recursive specification $E$ and each $X \in \vars(E)$,
we add a constant standing for the unique solution of $E$ for $X$ to the
constants of \prBTA.
The constant standing for the unique solution of $E$ for $X$ is denoted
by $\rec{X}{E}$.
Moreover, we use the following notation.
Let $t$ be a \prBTA\ term and $E$ be a guarded recursive specification.
Then we write $\rec{t}{E}$ for $t$ with, for all $X \in \vars(E)$, all
occurrences of $X$ in $t$ replaced by $\rec{X}{E}$.
We add the axioms for guarded recursion given in Table~\ref{axioms-rec}
to the axioms of \prBTA.%
\begin{table}[!t]
\caption{Axioms for the guarded recursion constants}
\label{axioms-rec}
\begin{eqntbl}
\begin{saxcol}
\rec{X}{E} = \rec{t_X}{E}  & \mif X \!=\! t_X \in E & \axiom{RDP} \\
E \Limpl X = \rec{X}{E}    & \mif X \in \vars(E)    & \axiom{RSP}
\end{saxcol}
\end{eqntbl}
\end{table}
In this table, $X$, $t_X$ and $E$ stand for an arbitrary variable of 
sort $\Thr$, an arbitrary \prBTA\ term and an arbitrary guarded 
recursive specification, respectively.
Side conditions are added to restrict the variables, terms and guarded
recursive specifications for which $X$, $t_X$ and $E$ stand.

The additional axioms for guarded recursion are known as the recursive
definition principle (RDP) and the recursive specification principle
(RSP).
The equations $\rec{X}{E} = \rec{t_X}{E}$ for a fixed $E$ express that
the constants $\rec{X}{E}$ make up a solution of $E$.
The conditional equations $E \Limpl X = \rec{X}{E}$ express that this
solution is the only one.

In Section~\ref{sect-strategic-interleaving}, we will use the notation 
$\PRC{i=k}{n}{\pi_i}{t_i}$ with $1 \leq k \leq n$ and 
$\sum_{i=k}^{n} \pi_i = 1$ for right-nested probabilistic composition.
The term $\PRC{i=k}{n}{\pi_i}{t_i}$ with $1 \leq k \leq n$ is defined by
induction on $n - k$ as follows:
\begin{ldispl}
\begin{aceqns}
\PRC{i=k}{n}{\pi_i}{t_i} & = & t_k &
\mathrm{if}\; k = n\;,
\\
\PRC{i=k}{n}{\pi_i}{t_i} & = &
 \prc{t_k}{\pi_k}{(\PRC{i=k+1}{n}{\frac{\pi_i}{1-\pi_k}}{t_i})} &
\mathrm{if}\; k < n\;.
\end{aceqns}
\end{ldispl}%
The thread denoted by $\PRC{i=k}{n}{\pi_i}{t_i}$ will behave like the 
thread denoted by $t_k$ with probability $\pi_k$ and \ldots \,and like 
the thread denoted by $t_n$ with proba\-bility~$\pi_n$.

\section{Interaction of Threads with Services}
\label{sect-TSI}

Services are objects that represent the behaviours exhibited by 
components of execution environments of instruction sequences at a high 
level of abstraction.
A service is able to process certain methods.
The processing of a method may involve a change of the service.
At completion of the processing of a method, the service produces a
reply value.
Execution environments are considered to provide a family of 
uniquely-named services.
A thread may interact with the named services from the service family 
provided by an execution environment.
That is, a thread may perform a basic action for the purpose of 
requesting a named service to process a method and to return a reply 
value at completion of the processing of the method.
In this section, we extend \prBTA\ with services, service families, a 
composition operator for service families, an operator that is 
concerned with this kind of interaction, and a general operator for 
abstraction from the internal action $\Tau$.

In \SFA, the algebraic theory of service families introduced below, it 
is assumed that a fixed but arbitrary set $\Meth$ of \emph{methods} has 
been given.
Moreover, the following is assumed with respect to services:
\begin{itemize}
\item
a signature $\Sig{\ServAlg}$ has been given that includes the following
sorts:
\begin{itemize}
\item
the sort $\Serv$ of
\emph{services};
\item
the sort $\BVal$ of \emph{Boolean values};
\end{itemize}
and the following constants and operators:
\begin{itemize}
\item
the
\emph{empty service} constant $\const{\emptyserv}{\Serv}$;
\item
the \emph{reply} constants $\const{\True,\False}{\BVal}$;
\item
for each $m \in \Meth$, the
\emph{derived service} operator $\funct{\derive{m}}{\Serv}{\Serv}$;
\item
for each $m \in \Meth$ and $\pi \in \Prob$, the
\emph{service reply} operator $\funct{\prsreply{m}{\pi}}{\Serv}{\BVal}$;
\end{itemize}
\item
a minimal $\Sig{\ServAlg}$-algebra $\ServAlg$ has been given in which 
the following holds:
\begin{itemize}
\item
$\True \neq \False$; 
\item
$\LAND{m \in \Meth}{}
 (\derive{m}(s) = \emptyserv \Liff 
  \LAND{\pi \in \Prob}{} \prsreply{m}{\pi}(s) = \False)$;
\item
$\LAND{\pi,\rho \in \Prob}{} 
  (\prsreply{m}{\pi}(s) = \True \Land \prsreply{m}{\rho}(s) = \True
    \Limpl \pi = \rho)$.
\end{itemize}
\end{itemize}

The intuition concerning $\derive{m}$ and $\prsreply{m}{\pi}$ is that on 
a request to service $s$ to process method $m$:
\begin{itemize}
\item
if $\prsreply{m}{\pi}(s) = \True$, $s$ processes $m$, produces the reply
$\True$ with probability $\pi$ and the reply $\False$ with probability
$1 - \pi$, and then proceeds as $\derive{m}(s)$;
\item
if $\prsreply{m}{\pi}(s) = \False$ for each $\pi \in \Prob$, $s$ is not 
able to process method $m$ and proceeds as $\emptyserv$.
\end{itemize}
The empty service $\emptyserv$ itself is unable to process any method.
A service is fully deterministic if, for all $m$, for all $s$, 
$\prsreply{m}{\pi}(s) = \True$ only if $\pi \in \set{0,1}$.

The assumptions with respect to services made above are the ones made 
before for the non-probabilistic case in e.g.~\cite{BM12b} adapted to 
the probabilistic case.

It is also assumed that a fixed but arbitrary set $\Foci$ of
\emph{foci} has been given.
Foci play the role of names of services in a service family. 

\SFA\ has the sorts, constants and operators from $\Sig{\ServAlg}$ and
in addition the sort $\ServFam$ of \emph{service families} and the 
following constant and operators:
\begin{itemize}
\item
the
\emph{empty service family} constant $\const{\emptysf}{\ServFam}$;
\item
for each $f \in \Foci$, the unary
\emph{singleton service family} operator
$\funct{\mathop{f{.}} \ph}{\Serv}{\ServFam}$;
\item
the binary
\emph{service family composition} operator
$\funct{\ph \sfcomp \ph}{\ServFam \x \ServFam}{\ServFam}$;
\item
for each $F \subseteq \Foci$, the unary
\emph{encapsulation} operator $\funct{\encap{F}}{\ServFam}{\ServFam}$.
\end{itemize}
We assume that there are infinitely many variables of sort $\Serv$,
including $s$, and infinitely many variables of sort $\ServFam$,
including $u,v,w$.
Terms are built as usual in the many-sorted case
(see e.g.~\cite{ST99a,Wir90a}).
We use prefix notation for the singleton service family operators and
infix notation for the service family composition operator.

The service family denoted by $\emptysf$ is the empty service family.
The service family denoted by a closed term of the form $f.t$ consists 
of one named service only, the service concerned is the service denoted 
by $t$, and the name of this service is $f$.
The service family denoted by a closed term of the form
$t \sfcomp t'$ consists of all named services that belong to either the
service family denoted by $t$ or the service family denoted by $t'$.
In the case where a named service from the service family denoted by
$t$ and a named service from the service family denoted by $t'$ have
the same name, they collapse to an empty service with the name
concerned.
The service family denoted by a closed term of the form $\encap{F}(t)$ 
consists of all named services with a name not in $F$ that belong to
the service family denoted by $t$.

The axioms of \SFA\ are given in 
Table~\ref{axioms-SFA}.%
\begin{table}[!t]
\caption{Axioms of \SFA}
\label{axioms-SFA}
{
\begin{eqntbl}
\begin{axcol}
u \sfcomp \emptysf = u                                 & \axiom{SFC1} \\
u \sfcomp v = v \sfcomp u                              & \axiom{SFC2} \\
(u \sfcomp v) \sfcomp w = u \sfcomp (v \sfcomp w)      & \axiom{SFC3} \\
f.s \sfcomp f.s' = f.\emptyserv                        & \axiom{SFC4}
\end{axcol}
\qquad
\begin{saxcol}
\encap{F}(\emptysf) = \emptysf                       & & \axiom{SFE1} \\
\encap{F}(f.s) = \emptysf            & \mif f \in F    & \axiom{SFE2} \\
\encap{F}(f.s) = f.s                 & \mif f \notin F & \axiom{SFE3} \\
\multicolumn{2}{@{}l@{\;\;}}
 {\encap{F}(u \sfcomp v) =
  \encap{F}(u) \sfcomp \encap{F}(v)}                   & \axiom{SFE4}
\end{saxcol}
\end{eqntbl}
}
\end{table}
In this table, $f$ stands for an arbitrary focus from $\Foci$ and
$F$ stands for an arbitrary subset of $\Foci$.
These axioms simply formalize the informal explanation given
above.

The constants, operators, and axioms of \SFA\ were presented for the 
first time in~\cite{BM09k}.

For the set $\BAct$ of basic actions, we now take 
$\set{f.m \where f \in \Foci, m \in \Meth}$.
Performing a basic action $f.m$ is taken as making a request to the
service named $f$ to process method $m$.

We combine \prBTA\ with \SFA\ and extend the combination with the 
following operators:
\begin{itemize}
\item
the binary \emph{use} operator
$\funct{\ph \sfuse \ph}{\Thr \x \ServFam}{\Thr}$;
\item
the unary \emph{abstraction} operator 
$\funct{\abstr{\Tau}}{\Thr}{\Thr}$;
\end{itemize}
and the axioms given in Tables~\ref{axioms-use} and~\ref{axioms-abstr},%
\begin{table}[!t]
\caption{Axioms for the use operator}
\label{axioms-use}
\begin{eqntbl}
\begin{axcol}
\DeadEnd \sfuse u = \DeadEnd                          & \axiom{prU1} \\
\Stop  \sfuse u = \Stop                               & \axiom{prU2} \\
(\Tau \bapf x) \sfuse u = \Tau \bapf (x \sfuse u)     & \axiom{prU3} \\
(\pcc{x}{f.m}{y}) \sfuse \encap{\set{f}}\hsp{-.1}(u) =
\pcc{(x \sfuse \encap{\set{f}}\hsp{-.1}(u))}
 {f.m}{(y \sfuse \encap{\set{f}}\hsp{-.1}(u))}        & \axiom{prU4} \\
(\pcc{x}{f.m}{y}) \sfuse (f.t \sfcomp \encap{\set{f}}\hsp{-.1}(u)) =
\Tau \bapf 
((\prc{x}{\pi}{y}) \sfuse 
 (f.\derive{m}t \sfcomp \encap{\set{f}}\hsp{-.1}(u))) \\ 
\hfill \mif \prsreply{m}{\pi}(t) = \True 
\phantom{\LAND{\pi \in \Prob}{}\,}                    & \axiom{prU5} \\
(\pcc{x}{f.m}{y}) \sfuse (f.t \sfcomp \encap{\set{f}}\hsp{-.1}(u)) = 
\Tau \bapf \DeadEnd
\hfill \mif \LAND{\pi \in \Prob}{} \prsreply{m}{\pi}(t) = \False 
                                                      & \axiom{prU6} \\
(\prc{x}{\pi}{y}) \sfuse u =
\prc{(x \sfuse u)}{\pi}{(y \sfuse u)}                 & \axiom{prU7} 
\end{axcol}
\end{eqntbl}
\end{table}
\begin{table}[!t]
\caption{Axioms for the abstraction operator}
\label{axioms-abstr}
\begin{eqntbl}
\begin{axcol}
\abstr{\Tau}(\Stop) = \Stop                             & \axiom{TA1} \\
\abstr{\Tau}(\DeadEnd) = \DeadEnd                       & \axiom{TA2} \\
\abstr{\Tau}(\Tau \bapf x) = \abstr{\Tau}(x)            & \axiom{TA3} \\
\abstr{\Tau}(\pcc{x}{f.m}{y}) =
\pcc{\abstr{\Tau}(x)}{f.m}{\abstr{\Tau}(y)}             & \axiom{TA4} \\
\abstr{\Tau}(\prc{x}{\pi}{y}) = 
\prc{\abstr{\Tau}(x)}{\pi}{\abstr{\Tau}(y)}             & \axiom{TA5}
\end{axcol}
\end{eqntbl}
\end{table}
and call the resulting theory \prTSI.
In these tables, $f$ stands for an arbitrary focus from $\Foci$, $m$ 
stands for an arbitrary method from $\Meth$, $\pi$ stands for an 
arbitrary probability from $\Prob$, and $t$ stands for an arbitrary term 
of sort $\Serv$.
The axioms formalize the informal explanation given below.
We use infix notation for the use operator.

The thread denoted by a closed term of the form $t \sfuse t'$ is the 
thread that results from processing the method of each basic action 
performed by the thread denoted by $t$ by the service with the focus of 
the basic action as its name in the service family denoted by $t'$  
each time that a service with the name in question really exists and as
long as the method concerned can be processed.
In the case that a service with the name in question does not really 
exist, the processing of a method is simply skipped (axiom~prU4).
When the method of a basic action performed by the thread can be 
processed by the named service, that service changes in accordance with 
the method and the thread is affected as follows: the basic action is 
turned into the internal action $\Tau$ and then an internal choice is 
made between the two ways to proceed according to the probabilities of 
the two possible reply values in the case of the method concerned 
(axiom prU5).
When the method of a basic action performed by the thread cannot be
processed by the named service, inaction occurs after the basic action 
is turned into the internal action $\Tau$ (axiom prU6).

The thread denoted by a closed term of the form $\abstr{\Tau}(t)$ is
the thread that results from concealing the presence of the internal 
action $\Tau$ in the thread denoted by $t$.

The use operator and the abstraction operator are adopted from the 
extension of \BTA\ with thread-service interaction presented before
in~\cite{BM12b}.
With the exception of axiom prU7, the axioms for the use operator are 
the ones given before for the non-probabilistic case in~\cite{BM12b} 
adapted to the probabilistic case.
With the exception of axiom TA5, the axioms for the abstraction operator 
are adopted from the extension of \BTA\ with thread-service interaction 
presented in~\cite{BM12b}.
Axiom prU7 and TA5 are new.

The following theorem concerns the question whether the operators added
to \prBTA\ in \prTSI\ are well axiomatized by the equations given in 
Tables~\ref{axioms-use} and~\ref{axioms-abstr} in the sense that these 
equations allow the projective limit model of \prBTA\ to be expanded to
a projective limit model of \prTSI.
\begin{theorem}
\label{theorem-use}
The operators added to \prBTA\ are well axiomatized, i.e.:
\begin{enumerate}
\item[(a)]
for all closed \prTSI\ terms $t$ of sort $\Thr$, there exists a closed 
\prBTA\ term $t'$ such that $t = t'$ is derivable from the axioms of 
\prTSI;
\item[(b)]
for all closed \prBTA\ terms $t$ and $t'$,
$t = t'$ is derivable from the axioms of \prBTA\ iff
$t = t'$ is derivable from the axioms of \prTSI;
\item[(c)]
for all closed \prTSI\ terms $t$ of sort $\Thr$, closed \prTSI\ terms 
$t'$ of sort $\ServFam$ and $n \in \Nat$, 
$\proj{n}(t \sfuse t') = \proj{n}(\proj{n}(t) \sfuse t')$ is derivable 
from the axioms of \prTSI\ and the following axioms for the unary 
operators $\proj{n}$ (which are explained below):%
\footnote
{Holding on to the usual conventions leads to the double use of the 
symbol $\pi$: without subscript it stands for a probability value 
and with subscript it stands for a projection operator.}
\begin{ldispl}
\begin{geqns}
\proj{0}(x) = \DeadEnd\;,                                             \\
\proj{n+1}(\DeadEnd) = \DeadEnd\;,                                    \\
\proj{n+1}(\Stop) = \Stop\;,                                          \\
\end{geqns}
\quad\;\;
\begin{geqns}
{} \\
\proj{n+1}(\pcc{x}{a}{y}) = \pcc{\proj{n}(x)}{a}{\proj{n}(y)}\;,      \\
\proj{n+1}(\prc{x}{\pi}{y}) = \prc{\proj{n+1}(x)}{\pi}{\proj{n+1}(y)}\;. 
\end{geqns}
\end{ldispl}%
where $n$ stands for an arbitrary natural number from $\Nat$, $a$ stands
for an arbitrary basic action from $\BActTau$, and $\pi$ is an arbitrary
probability from $\Prob$;
\item[(d)]
for all closed \prTSI\ terms $t$ of sort $\Thr$ and $n \in \Nat$, there 
exists a $k \in \Nat$ such that, for all $m \in \Nat$ with $m \geq k$, 
$\proj{n}(\abstr{\Tau}(t)) = \proj{n}(\abstr{\Tau}(\proj{m}(t)))$ 
is derivable from the axioms of \prTSI\ and the axioms for the operators 
$\proj{n}$ introduced in part~(c).
\end{enumerate}
\end{theorem}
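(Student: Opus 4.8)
The plan is to prove the four parts essentially in the order stated, since each relies on machinery set up by the previous one. For part~(a), I would argue by structural induction on the closed \prTSI\ term $t$ of sort $\Thr$. The base cases $\DeadEnd$ and $\Stop$ are already \prBTA\ terms. For a postconditional or probabilistic composition of \prBTA\ terms the claim is immediate. The work is in eliminating the use operator and the abstraction operator. For $t \sfuse t'$, I would first apply part~(a) of the induction hypothesis to bring $t$ to a closed \prBTA\ term, and independently use the \SFA\ axioms (SFC1--SFC4, SFE1--SFE4) to bring $t'$ into a normal form $f_1.t_1 \sfcomp \cdots \sfcomp f_k.t_k$ with distinct foci — or, depending on which foci are relevant, into the shape $\encap{\set{f}}(u)$ or $f.t \sfcomp \encap{\set{f}}(u)$ demanded by axioms prU4--prU6. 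Then I would do an inner induction on the structure of the \prBTA\ term in the first argument: the cases $\DeadEnd$, $\Stop$, $\Tau \bapf x$, $\pcc{x}{f.m}{y}$, and $\prc{x}{\pi}{y}$ are exactly covered by prU1--prU7, and each reduces the use operator to uses applied to structurally smaller threads, to which the outer/inner induction hypotheses apply. Crucially the two informal assumptions on $\ServAlg$ — that $\prsreply{m}{\pi}(t) = \True$ determines $\pi$ uniquely, and that $\derive{m}(t) = \emptyserv$ iff $\prsreply{m}{\pi}(t) = \False$ for all $\pi$ — guarantee that exactly one of prU5, prU6 applies and that the resulting probability is well defined. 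Abstraction is handled similarly via TA1--TA5, pushing $\abstr{\Tau}$ through the term structure; since all \prBTA\ terms are finite, this terminates.

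For part~(b), the left-to-right direction is trivial because every \prBTA\ axiom is a \prTSI\ axiom. For right-to-left, the standard argument is a conservativity/consistency argument: the projective limit model of \prBTA\ (outlined in the appendix) can be expanded to a model of \prTSI\ by interpreting the use and abstraction operators, and in that model two closed \prBTA\ terms have the same denotation iff they are equal in \prBTA\ (since the projective limit model is a model of \prBTA\ in which \prBTA\ is complete for closed terms — or at least sound, which suffices here together with part~(a) giving normal forms). Concretely: if $t = t'$ is \prTSI-derivable with $t, t'$ closed \prBTA\ terms, it holds in the expanded model, hence the \prBTA-denotations of $t$ and $t'$ coincide, hence $t = t'$ is \prBTA-derivable. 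The one thing to be careful about is that $\sfuse$ and $\abstr{\Tau}$ are genuinely well defined on the projective limit model; but this is precisely what parts~(c) and~(d) are for, so there is a mild circularity to break by first establishing the operators on closed terms (via part~(a), the operators are definable up to provable equality) and only afterwards lifting to the projective limit.

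For part~(c), I would proceed by induction on $n$ and then, for fixed $n$, by structural induction on the closed \prTSI\ term $t$, after first using part~(a) to replace $t$ by a closed \prBTA\ term and $t'$ by an \SFA\ normal form. The case $n = 0$ is immediate since $\proj{0}$ sends everything to $\DeadEnd$ and $\DeadEnd \sfuse t' = \DeadEnd$ by prU1. For $n+1$: on $\DeadEnd, \Stop$ both sides collapse; on $\Tau \bapf x$ one uses prU3 and the projection axiom for postconditional composition (with $a = \Tau$) together with the induction hypothesis at $n$; on $\pcc{x}{f.m}{y}$ one splits according to whether the relevant service exists and, if so, whether prU5 or prU6 applies — in the prU5 case the right-hand side acquires a $\prc{\cdot}{\pi}{\cdot}$ which the probabilistic projection axiom leaves at depth $n+1$, matching the left-hand side — and in all subcases the induction hypothesis at $n$ closes the gap; on $\prc{x}{\pi}{y}$ one uses prU7 and the probabilistic projection axiom, reducing to the induction hypothesis at the \emph{same} $n+1$ but on structurally smaller terms, so a subsidiary structural induction is needed here. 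Part~(d) is the analogue for abstraction. Because $\abstr{\Tau}$ can delete internal actions, a single $\Tau$-step of $\abstr{\Tau}(t)$ may correspond to an unbounded but, for a closed term, \emph{finite} prefix of $\Tau$'s in $t$; since every closed \prTSI\ term denotes a finite thread with some bound $\ell$ on its total number of basic actions, taking $k = \ell$ (or $k$ large enough that $\proj{k}(t) = \proj{\ell}(t) = t$ up to provable equality) makes $\proj{n}(\abstr{\Tau}(\proj{m}(t)))$ stationary for $m \geq k$, and equal to $\proj{n}(\abstr{\Tau}(t))$. The argument is again by induction on the term structure, reading off $k$ from the term.

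The main obstacle, as indicated, is part~(d): one must control how $\proj{n}$ interacts with $\abstr{\Tau}$ when the latter erases internal actions, i.e.\ one must show that a finite depth of the abstracted thread is determined by a finite (effectively computable) depth of the original. The cleanest way around this is to exploit that every closed \prTSI\ term of sort $\Thr$ is, by part~(a), a closed \prBTA\ term, hence finite, so there is a uniform bound on the number of $\Tau$-steps that can be absorbed; the subtlety is purely in making the choice of $k$ explicit and verifying stability of the projections for all $m \geq k$, for which a structural induction assigning to each term an upper bound on its $\Tau$-depth suffices.
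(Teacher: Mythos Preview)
Your treatment of parts~(a), (c), and~(d) is correct and essentially the same as the paper's: structural induction on $t$ (after reducing to closed \prBTA\ terms via part~(a)), with an inner induction on the thread argument for $\sfuse$ and $\abstr{\Tau}$, together with the decomposition of a closed $\ServFam$ term into the shape $\encap{\set{f}}(u)$ or $f.t'' \sfcomp \encap{\set{f}}(u)$ so that exactly one of prU4--prU6 fires. The paper also records the auxiliary fact $\proj{n}(t) = \proj{n}(\proj{n}(t))$, which you will need but did not mention; it is easy, and your argument for~(d) via the observation that every closed \prBTA\ term $t$ satisfies $\proj{m}(t) = t$ for all sufficiently large $m$ is a clean way to get the required $k$.

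Where you genuinely diverge from the paper is part~(b). You propose a \emph{semantic} conservativity argument: expand the projective limit model of \prBTA\ to a model of \prTSI, then use soundness of \prTSI\ and completeness of \prBTA\ for closed terms. This does work, but only after parts~(c) and~(d) are established, since those are exactly what guarantee that $\sfuse$ and $\abstr{\Tau}$ are well defined on projective sequences; so the ``mild circularity'' you flag is real and is resolved by reordering, not by the sketch you give (``establishing the operators on closed terms'' is not enough to build the model you need). The paper instead gives a short, purely \emph{syntactic} argument that avoids any appeal to models: among the \prTSI\ axioms that are not \prBTA\ axioms, only prU1, prU2, prU6, TA1, and TA2 have a side that can match (a subterm of) a closed \prBTA\ term; applying any one of them produces a term that is no longer a closed \prBTA\ term, and from that term the only applicable non-\prBTA\ axiom is the same one in the opposite direction. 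Hence any \prTSI\ derivation between closed \prBTA\ terms collapses to a \prBTA\ derivation. This argument is self-contained, does not depend on~(c) or~(d), and lets the paper use~(b) as an input to the model construction rather than an output of it.
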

\begin{proof}
Part~(a) is easily proved by induction on the structure of $t$, and in 
the case where $t$ is of the form $t_1 \sfuse t_2$ and the case where 
$t$ is of the form $\abstr{\Tau}(t_1)$ by induction on the structure of 
$t_1$.
In the subcase where $t$ is of the form 
\mbox{$\pcc{t'_1}{a}{t'_1} \sfuse t_2$}, we need the easy to prove fact 
that, for each $f \in \Foci$ and closed term $t$ of sort $\ServFam$, 
either $t = \encap{f}(t)$ is derivable or there exists a closed term 
$t'$ of sort $\Serv$ such that $t = f.t' \sfcomp \encap{f}(t)$ is 
derivable.

In the case of part~(b), the implication from left to right follows 
immediately from the fact that the axioms of \prBTA\ are included in the 
axioms of \prTSI.
The implication from right to left is not difficult to see either. 
From the axioms of \prTSI\ that are not axioms of \prBTA, only axioms 
prU1, prU2, prU6, TA1, and TA2 may be applicable to a closed \prBTA\ 
term $t$.
If one of them is applicable, then the application yields an equation 
$t = t'$ in which $t'$ is not a closed \prBTA\ term.
Moreover, only the axiom whose application yielded $t = t'$ is 
applicable to $t'$, but now in the opposite direction.
Hence, applications of axioms of \prTSI\ that are not axioms of \prBTA\
do not yield additional equations.

By part~(a), it is sufficient to prove parts~(c) and~(d) for all closed 
\prBTA\ terms $t$.
Parts~(c) and~(d) are easily proved by induction on the structure of 
$t$, and in each case by case distinction between $n = 0$ and 
$n > 0$.
In the proof of both parts, we repeatedly need the easy to prove fact 
that, for all closed \prBTA\ terms $t$ and $n \in \Nat$, 
$\proj{n}(t) = \proj{n}(\proj{n}(t))$ is derivable.
In the proof of part~(c), in the case where $t$ is of the form 
$\pcc{t_1}{a}{t_2}$, we need again the fact mentioned at the end of the
proof outline of part~(a).
\qed
\end{proof}
The unary operators $\proj{n}$ are called \emph{projection} operators.
The thread denoted by a closed term of the form $\proj{n}(t)$ is the 
thread that differs from the thread denoted by $t$ in that it becomes 
inactive as soon as it has performed $n$ actions.

By parts~(a) and~(b) of Theorem~\ref{theorem-use}, we know that the 
carrier of the projective limit model of \prBTA\ can serve as the 
carrier of a projective limit model of \prTSI\ if it is possible to 
define on this carrier operations corresponding to the added operators 
such that the added equations are  satisfied.
By  parts~(c) and~(d) of Theorem~\ref{theorem-use}, we know that it is 
possible to do so.
Thus, we know that the projective limit model of \prBTA\ can be expanded 
to a projective limit model of \prTSI.

The actual expansion goes along the same lines as in the 
non-probabilistic case (see~\cite{BM12b}).
An outline of this expansion is given in Appendix~\ref{app-prTAtsi}.
Because the depth of the approximations of a thread may decrease by 
abstraction, we do not have that, for all $n$ and $t$,
$\proj{n}(\abstr{\Tau}(t)) = \proj{n}(\abstr{\Tau}(\proj{n}(t)))$
is derivable.
However, it is sufficient that there exists a $k \in \Nat$ such that, 
for all $m \in \Nat$ with $m \geq k$, 
$\proj{n}(\abstr{\Tau}(t)) = \proj{n}(\abstr{\Tau}(\proj{m}(t)))$ 
is derivable (see also~\cite{BM12b}).

\section{A Probabilistic Program Notation}
\label{sect-prPGLB}

In this section, we introduce the probabilistic program notation 
\prPGLB\ (probabilistic PGLB).
In~\cite{BL02a}, a hierarchy of program notations rooted in program
algebra is presented.
One of the program notations that belong to this hierarchy is \PGLB\
(ProGramming Language B).
This program notation is close to existing assembly languages and has
relative jump instructions.
The program notation \prPGLB\ is \PGLB\ extended with probabilistic 
instructions that allow probabilistic choices to be made during the 
execution of instruction sequences. 

In \prPGLB, it is assumed that a fixed but arbitrary non-empty finite
set $\BInstr$ of \emph{basic instructions} has been given.
The intuition is that the execution of a basic instruction in most
instances modifies a state and in all instances produces a reply at its
completion.
The possible replies are the values $\True$ and $\False$, and the actual 
reply is in most instances state-dependent.
Therefore, successive executions of the same basic instruction may
produce different replies.
The set $\BInstr$ is the basis for the set of all instructions that may
appear in the instruction sequences considered in \prPGLB.
These instructions are called primitive instructions.

The program notation \prPGLB\ has the following primitive instructions:
\begin{itemize}
\item
for each $a \in \BInstr$, a \emph{plain basic instruction} $a$;
\item
for each $a \in \BInstr$, a \emph{positive test instruction} $\ptst{a}$;
\item
for each $a \in \BInstr$, a \emph{negative test instruction} $\ntst{a}$;
\item
for each $\pi \in \Prob$, a \emph{plain random choice instruction} 
$\prbsc{\pi}$;
\item
for each $\pi \in \Prob$, a \emph{positive random choice instruction} 
$\ptst{\prbsc{\pi}}$;
\item
for each $\pi \in \Prob$, a \emph{negative random choice instruction} 
$\ntst{\prbsc{\pi}}$;
\item
for each $l \in \Nat$, a \emph{forward jump instruction}
$\fjmp{l}$;
\item
for each $l \in \Nat$, a \emph{backward jump instruction}
$\bjmp{l}$;
\item
a \emph{termination instruction} $\halt$.
\end{itemize}
A \prPGLB\ instruction sequence has the form
$u_1 \conc \ldots \conc u_k$, where $u_1,\ldots,u_k$ are primitive
instructions of \prPGLB.

On execution of a \prPGLB\ instruction sequence, these primitive
instructions have the following effects:
\begin{itemize}
\item
the effect of a positive test instruction $\ptst{a}$ is that basic
instruction $a$ is executed and execution proceeds with the next
primitive instruction if $\True$ is produced and otherwise the next
primitive instruction is skipped and execution proceeds with the
primitive instruction following the skipped one -- if there is no
primitive instruction to proceed with, execution becomes inactive;
\item
the effect of a negative test instruction $\ntst{a}$ is the same as the
effect of $\ptst{a}$, but with the role of the value produced reversed;
\item
the effect of a plain basic instruction $a$ is the same as the effect of
$\ptst{a}$, but execution always proceeds as if $\True$ is produced;
\item
the effect of a positive random choice instruction $\ptst{\prbsc{\pi}}$ 
is that first $\True$ is produced with probability $\pi$ and $\False$ is 
produced with probability $1 - \pi$ and then execution proceeds with the 
next primitive instruction if $\True$ is produced and otherwise the next
primitive instruction is skipped and execution proceeds with the
primitive instruction following the skipped one -- if there is no
primitive instruction to proceed with, execution becomes inactive;
\item
the effect of a negative random choice instruction $\ntst{\prbsc{\pi}}$ 
is the same as the effect of $\ptst{\prbsc{\pi}}$, but with the role of 
the value produced reversed;
\item
the effect of a plain random choice instruction $\prbsc{\pi}$ is the 
same as the effect of $\ptst{\prbsc{\pi}}$, but execution always 
proceeds as if $\True$ is produced;
\item
the effect of a forward jump instruction $\fjmp{l}$ is that execution
proceeds with the $l^\mathrm{th}$ next primitive instruction -- if $l$
equals $0$ or there is no primitive instruction to proceed with,
execution becomes inactive;
\item
the effect of a backward jump instruction $\bjmp{l}$ is that execution
proceeds with the $l^\mathrm{th}$ previous primitive instruction -- if
$l$ equals $0$ or there is no primitive instruction to proceed with,
execution becomes inactive;
\item
the effect of the termination instruction $\halt$ is that execution 
terminates.
\end{itemize}

With the exception of the random choice instructions, the primitive
instructions of \prPGLB\ are adopted from \PGLB.
Counterparts of the random choice instructions are especially found in
probabilistic extensions of Dijkstra's guarded command language 
(see e.g.~\cite{HSM97a}).

In order to describe the behaviours produced by \prPGLB\ instruction 
sequences on execution, we need a service that behaves as a random 
Boolean generator.
This service is able to process the following methods:
\begin{itemize}
\item
for each $\pi \in \Prob$, a \emph{get random Boolean method} 
$\get(\pi)$.
\end{itemize}
For each $\pi \in \Prob$, the method $\get(\pi)$ can be explained as 
follows: the service produces the reply $\True$ with probability $\pi$ 
and the reply $\False$ with probability $1 - \pi$.

For the carrier of sort $\Serv$, we take the set 
$\set{\Random,\emptyserv}$.
For each $m \in \Meth$ and $\pi \in \Prob$, we take the functions 
$\derive{m}$ and $\prsreply{m}{\pi}$ such that:
\begin{ldispl}
\begin{geqns}
\derive{\get(\pi)}(\Random)        = \Random\;, \\
\prsreply{\get(\pi)}{\pi}(\Random) = \True\;, 
\end{geqns}
\qquad
\begin{gceqns}
\derive{m}(\Random) = \emptyserv & \mathrm{if} \; 
m \not\in \set{\get(\pi) \where \pi \in \Prob}\;, \\
\prsreply{m}{\pi}(\Random) = \False & \mathrm{if} \; 
m \neq \get(\pi)\;.
\end{gceqns}
\end{ldispl}%
Moreover, we take the name $\Random$ used above to denote the element of
the carrier of sort $\Serv$ that differs from $\emptyserv$ for a 
constant of sort $\Serv$.
It is assumed that $\get(\pi) \in \Meth$ for each $\pi \in \Prob$.
It is also assumed that $\random \in \Foci$.

The behaviours produced by \prPGLB\ instruction sequences on execution
are considered to be probabilistic threads, with the basic instructions 
taken as basic actions.
The \emph{thread extraction} operation $\extr{\ph}$ defines, for each
\prPGLB\ instruction sequence, the behaviour produced on its execution.
The thread extraction operation is defined by
\begin{ldispl}
\extr{u_1 \conc \ldots \conc u_k} =
\abstr{\Tau}(\extr{1,u_1 \conc \ldots \conc u_k} \sfuse \random.\Random)\;,
\end{ldispl}%
where $\extr{\ph,\ph}$ is defined by the equations given in
Table~\ref{axioms-thread-extr} (for $a \in \BInstr$, $\pi \in \Prob$, and 
$l,i \in \Nat$)%
\begin{table}[!t]
\caption{Defining equations for the thread extraction operation}
\label{axioms-thread-extr}
\begin{eqntbl}
\begin{aceqns}
\extr{i,u_1 \conc \ldots \conc u_k} & = & \DeadEnd
& \mif \Lnot 1 \leq i \leq k \\
\extr{i,u_1 \conc \ldots \conc u_k} & = &
a \bapf \extr{i+1,u_1 \conc \ldots \conc u_k}
& \mif u_i = a \\
\extr{i,u_1 \conc \ldots \conc u_k} & = &
\pcc{\extr{i+1,u_1 \conc \ldots \conc u_k}}{a}
    {\extr{i+2,u_1 \conc \ldots \conc u_k}}
& \mif u_i = \ptst{a} \\
\extr{i,u_1 \conc \ldots \conc u_k} & = &
\pcc{\extr{i+2,u_1 \conc \ldots \conc u_k}}{a}
    {\extr{i+1,u_1 \conc \ldots \conc u_k}}
& \mif u_i = \ntst{a} \\
\extr{i,u_1 \conc \ldots \conc u_k} & = &
\random.\get(\pi) \bapf \extr{i+1,u_1 \conc \ldots \conc u_k}
& \mif u_i = \prbsc{\pi} \\
\extr{i,u_1 \conc \ldots \conc u_k} & = &
\pcc{\extr{i+1,u_1 \conc \ldots \conc u_k}}{\random.\get(\pi)}
    {\extr{i+2,u_1 \conc \ldots \conc u_k}}
& \mif u_i = \ptst{\prbsc{\pi}} \\
\extr{i,u_1 \conc \ldots \conc u_k} & = &
\pcc{\extr{i+2,u_1 \conc \ldots \conc u_k}}{\random.\get(\pi)}
    {\extr{i+1,u_1 \conc \ldots \conc u_k}}
& \mif u_i = \ntst{\prbsc{\pi}} \\
\extr{i,u_1 \conc \ldots \conc u_k} & = &
\extr{i+l,u_1 \conc \ldots \conc u_k}
& \mif u_i = \fjmp{l} \\
\extr{i,u_1 \conc \ldots \conc u_k} & = &
\extr{i \monus l,u_1 \conc \ldots \conc u_k}
& \mif u_i = \bjmp{l} \\
\extr{i,u_1 \conc \ldots \conc u_k} & = & \Stop
& \mif u_i = \halt 
\end{aceqns}
\end{eqntbl}
\end{table}%
\footnote
{We write $i \monus j$ for the monus of $i$ and $j$, i.e.\
 $i \monus j = i - j$ if $i \geq j$ and $i \monus j = 0$ otherwise.
}
and the rule that $\extr{i,u_1 \conc \ldots \conc u_k} = \DeadEnd$ if
$u_i$ is the beginning of an infinite jump chain.%
\footnote
{This rule can be formalized, cf.~\cite{BM07g}.}

If $1 \leq i \leq k$, 
$\abstr{\Tau}
  (\extr{i,u_1 \conc \ldots \conc u_k} \sfuse \random.\Random)$ can be 
read as the behaviour produced by $u_1 \conc \ldots \conc u_k$ on 
execution if execution starts at the $i^\mathrm{th}$ primitive 
instruction.
By default, execution starts at the first primitive instruction.

In~\cite{BM09f}, we proposed several kinds of probabilistic jump
instructions (bounded and unbounded, according to uniform probability 
distributions and geometric probability distributions).
The meaning of instruction sequences from extensions of \prPGLB\ with
these kinds of probabilistic instructions can be given by a translation 
to instruction sequences from \prPGLB.

\section{Probabilistic Strategic Interleaving of Threads}
\label{sect-strategic-interleaving}

Multi-threading refers to the concurrent existence of several threads
in a program under execution.
It is the dominant form of concurrency provided by contemporary 
programming languages such as Java~\cite{GJSB00a} and C\#~\cite{HWG03a}.
Theories of concurrent processes such as \ACP~\cite{BW90}, 
CCS~\cite{Mil89}, and CSP~\cite{Hoa85} are based on arbitrary 
interleaving.
In the case of multi-threading, more often than not some interleaving 
strategy is used.
We abandon the point of view that arbitrary interleaving is the most 
appropriate abstraction when dealing with multi-threading.
The following points illustrate why we find difficulty in taking that
point of view:
(a)~whether the interleaving of certain threads leads to inactiveness
depends on the interleaving strategy used;
(b)~sometimes inactiveness occurs with a particular interleaving
strategy whereas arbitrary interleaving would not lead to inactiveness, 
and vice versa.
Demonstrations of (a) and (b) are given in~\cite{BM04c}
and~\cite{BM05c}, respectively.

The probabilistic features of \prBTA\ allow it to be extended with 
interleaving strategies that correspond to probabilistic scheduling 
algorithms.
In this section, we take up the extension of \prBTA\ with such
probabilistic interleaving strategies.
The presented extension covers an arbitrary probabilistic interleaving 
strategy that can be represented in the way that is explained below.

We write $\BActTau'$ for $\BActTau \union \set{\nt,\Stop,\DeadEnd}$ and 
we write $\Hist$ for $\seqof{(\Natpos \x \Natpos)}$.%
\footnote
{We write $\Natpos$ for the set $\set{n \in \Nat \where n \geq 1}$ of 
positive natural numbers.}
The elements of $\Hist$ are called \emph{interleaving histories}.
The intuition concerning interleaving histories is as follows:
if the $j$th pair of an interleaving history is $\tup{i,n}$, then the 
$i$th thread got a turn in the $j$th interleaving step and after its
turn there were $n$ threads to be interleaved.

With regard to interleaving of threads, it is assumed that the following 
has been given:
\begin{itemize}
\sloppy
\item
a set $S$; 
\item
an indexed family of functions $\indfam{\sched{n}}{n \in \Natpos}$
where 
$\funct{\sched{n}}{\Hist \x S}{(\set{1,\ldots,n} \to \Prob)}$
for each $n \in \Natpos$;
\item
an indexed family of functions $\indfam{\updat{n}}{n \in \Natpos}$
where 
$\funct{\updat{n}}{\Hist \x S \x \set{1,\ldots,n} \x \BActTau'}{S}$
for each $n \in \Natpos$.
\end{itemize}
The elements of $S$ are called \emph{control states}, $\sched{n}$ is 
called an \emph{abstract scheduler} (for $n$ threads), and $\updat{n}$ 
is called a \emph{control state transformer} (for $n$ threads).
The intuition concerning $S$, $\indfam{\sched{n}}{n \in \Natpos}$, and 
$\indfam{\updat{n}}{n \in \Natpos}$ is as follows:
\begin{itemize}
\item
the control states from $S$ encode data relevant to the interleaving 
strategy (e.g., for each of the threads being interleaved, the set of 
all foci naming services on which it currently keeps a lock);
\item
for each $h \in \Hist$ and $s \in S$, $\sched{n}(h,s)$ is the 
probability distribution on $n$ threads that assigns to each of the 
threads the probability that it gets the next turn after history $h$ in 
state $s$;
\item
for each $h \in \Hist$, $s \in S$, $i \in \set{1,\ldots,n}$, and 
$a \in \BActTau'$, $\updat{n}(h,s,i,a)$ is the control state that arises 
after history $h$ in state $s$ on the $i$th thread doing~$a$.
\end{itemize}
Thus, $S$, $\indfam{\sched{n}}{n \in \Natpos}$, and 
$\indfam{\updat{n}}{n \in \Natpos}$ provide a way to represent a 
probabilistic interleaving strategy.
The abstraction of a scheduler used here is essentially the notion of
a scheduler defined in~\cite{SS00a}.

We extend \prBTA\ with the following operators:
\begin{itemize}
\item
the ternary \emph{forking postconditional composition} operator
$\funct{\pcc{\ph}{\newthr{\ph}}{\ph}}{\Thr \x \Thr \x \Thr}{\Thr}$; 
\item
for each $n \in \Natpos$, $h \in \Hist$, and $s \in S$,
the $n$-ary \emph{strategic interleaving} operator
$\funct{\siop{n}{h}{s}}{\Thr \x \cdots \x \Thr}{\Thr}$;
\item
for each $n,i \in \Natpos$ with $i \leq n$, $h \in \Hist$, and 
$s \in S$,
the $n$-ary \emph{positional strategic interleaving} operator
$\funct{\posmop{n}{i}{h}{s}}{\Thr \x \cdots \x \Thr}{\Thr}$;
\item
the unary \emph{deadlock at termination} operator
$\funct{\stdop}{\Thr}{\Thr}$;
\end{itemize}
and the axioms given in Table~\ref{axioms-strategic-interleaving},%
\footnote
{We write
 $\emptyseq$ for the empty sequence, 
 $d$ for the sequence having $d$ as sole element, and 
 $\alpha \concat \alpha'$ for the concatenation of sequences $\alpha$ 
 and $\alpha'$.  
 We assume that the usual identities, such as
 $\emptyseq \concat \alpha = \alpha$ and
 $(\alpha \concat \alpha') \concat \alpha'' =
  \alpha \concat (\alpha' \concat \alpha'')$, hold.
}%
\begin{table}[!t]
\caption{Axioms for strategic interleaving}
\label{axioms-strategic-interleaving}
\begin{eqntbl}
\begin{axcol}
\si{n}{h}{s}{x_1,\ldots,x_n} =
\PRC{i=1}{n}{\sched{n}(h,s)(i)}{\posm{n}{i}{h}{s}{x_1,\ldots,x_n}}
                                                     & \axiom{prSI1}  
\eqnsep
\posm{1}{i}{h}{s}{\DeadEnd} = \DeadEnd               & \axiom{prSI2} \\
\posm{n+1}{i}{h}{s}
 {x_1,\ldots,x_{i-1},\DeadEnd,x_{i+1},\ldots,x_{n+1}} =
\\ \quad 
\std{\si{n}{h \concat \tup{i,n}}{\updat{n+1}(h,s,i,\DeadEnd)}
      {x_1,\ldots,x_{i-1},x_{i+1},\ldots,x_{n+1}}}   & \axiom{prSI3} \\
\posm{1}{i}{h}{s}{\Stop} = \Stop                     & \axiom{prSI4} \\
\posm{n+1}{i}{h}{s}{x_1,\ldots,x_{i-1},\Stop,x_{i+1},\ldots,x_{n+1}} =
\\ \quad
\si{n}{h \concat \tup{i,n}}{\updat{n+1}(h,s,i,\Stop)}
 {x_1,\ldots,x_{i-1},x_{i+1},\ldots,x_{n+1}}         & \axiom{prSI5} \\
\posm{n}{i}{h}{s}
 {x_1,\ldots,x_{i-1},\pcc{x_i'}{\newthr{x}}{x_i''},x_{i+1},\ldots,x_n} =
\\ \quad
\Tau \bapf
\si{n+1}{h \concat \tup{i,n+1}}{\updat{n}(h,s,i,\nt)}
 {x_1,\ldots,x_{i-1},x_i',x_{i+1},\ldots,x_n,x}        
                                                     & \axiom{prSI6} \\
\posm{n}{i}{h}{s}
 {x_1,\ldots,x_{i-1},\pcc{x_i'}{a}{x_i''},x_{i+1},\ldots,x_n} =
\\ \quad
\pcc{\si{n}{h \concat \tup{i,n}}{\updat{n}(h,s,i,a)}
      {x_1,\ldots,x_{i-1},x_i',x_{i+1},\ldots,x_n}\\ \qquad}{a}         
    {\\ \quad \si{n}{h \concat \tup{i,n}}{\updat{n}(h,s,i,a)}
               {x_1,\ldots,x_{i-1},x_i'',x_{i+1},\ldots,x_n}}        
                                                     & \axiom{prSI7} \\
\posm{n}{i}{h}{s}
 {x_1,\ldots,x_{i-1},\prc{x_i'}{\pi}{x_i''},x_{i+1},\ldots,x_n} =
\\ \quad
\prc{\posm{n}{i}{h}{s}
      {x_1,\ldots,x_{i-1},x_i',x_{i+1},\ldots,x_n}\\ \qquad}{\pi}         
    {\\ \quad \posm{n}{i}{h}{s}
               {x_1,\ldots,x_{i-1},x_i'',x_{i+1},\ldots,x_n}}        
                                                     & \axiom{prSI8} 
\eqnsep
\std{\DeadEnd} = \DeadEnd                            & \axiom{DT1} \\
\std{\Stop} = \DeadEnd                               & \axiom{DT2} \\
\std{\pcc{x}{\newthr{z}}{y}} =
       \pcc{\std{x}}{\newthr{\std{z}}}{\std{y}}      & \axiom{DT3} \\
\std{\pcc{x}{a}{y}} = \pcc{\std{x}}{a}{\std{y}}      & \axiom{DT4} \\
\std{\prc{x}{\pi}{y}} = \prc{\std{x}}{\pi}{\std{y}}  & \axiom{DT5} 
\end{axcol}
\end{eqntbl}
\end{table}
and call the resulting theory \prTA.
In this table, $n$ and $i$ stand for arbitrary numbers from $\Natpos$ 
with $i \leq n$, $h$ stands for an arbitrary interleaving history from 
$\Hist$, $s$ stands for an arbitrary control state from $S$, $a$ stands 
for an arbitrary basic action from $\BActTau$, and $\pi$ stands for an 
arbitrary probability from $\Prob$.

The forking postconditional composition operator has the same shape as
the postconditional composition operators introduced in
Section~\ref{sect-prBTA}.
Formally, no basic action is involved in forking postconditional 
composition.
However, for an operational intuition, in $\pcc{t}{\newthr{t''}}{t'}$,
$\newthr{t''}$ can be considered a thread forking action.
It represents the act of forking off thread $t''$.
Like with real basic actions, a reply is produced upon performing a 
thread forking action.

The thread denoted by a closed term of the form 
$\si{n}{h}{s}{t_1,\ldots,t_n}$ is the thread that results from 
interleaving of the $n$ threads denoted by $t_1,\ldots,t_n$ after
history $h$ in state $s$, according to the interleaving strategy 
represented by $S$, $\indfam{\sched{n}}{n \in \Natpos}$, and 
$\indfam{\updat{n}}{n \in \Natpos}$.
By the interleaving, a number of threads is turned into a single thread.
In this single thread, the internal action $\Tau$ arises as a residue of 
each thread forking action encountered.
Moreover, the possibility that $\False$ is produced as a reply upon 
performing a thread forking action is ignored.
This reflects our focus on the case where capacity problems with respect 
to thread forking never~arise.

The positional strategic interleaving operators are auxiliary operators 
used to axiomatize the strategic interleaving operators.
The role of the positional strategic interleaving operators in the 
axiomatization is similar to the role of the left merge operator found 
in process algebra (see e.g.~\cite{BW90}).
The deadlock at termination operator is an auxiliary operator as well.
It is used in axiom prSI3 to express that in the event of inactiveness 
of one thread, the whole become inactive only after all other threads 
have terminated or become inactive.
The thread denoted by a closed term of the form $\std{t}$ is the thread
that results from turning termination into inactiveness in the thread
denoted by $t$.

The forking postconditional composition operator and the deadlock at 
termination operator are adopted from earlier extensions of \BTA\ with 
strategic interleaving.
The strategic interleaving operators and the positional strategic 
interleaving operators are not adopted from earlier extensions of \BTA\ 
with strategic interleaving.
To our knowledge, no probabilistic process algebras with counterparts of 
these operators has been proposed until now. 
Axioms prSI1--prSI8 and DT5 are new.
Axioms DT1--DT4 are adopted from the extension of \BTA\ with strategic
interleaving and thread forking presented in~\cite{BM06c}.

Consider the case where $S$ is a singleton set, 
for each $n \in \Natpos$, $\sched{n}$ is defined by
\begin{ldispl}
\begin{gceqns}
\sched{n}(\emptyseq,s)(i) = 1\;
 & \mathrm{if} \;i = 1\;,  \\
\sched{n}(\emptyseq,s)(i) = 0\;
 & \mathrm{if} \;i \neq 1\;, \\
\sched{n}(h \concat \tup{j,n},s)(i) = 1\;
 & \mathrm{if} \;i = (j + 1) \bmod n\;, \\
\sched{n}(h \concat \tup{i,n},s)(i) = 0\;
 & \mathrm{if} \;i \neq (j + 1) \bmod n
\end{gceqns}
\end{ldispl}%
and, $\updat{n}$ is defined by 
\begin{ldispl}
\updat{n}(h,s,i,a) = s\;.
\end{ldispl}%
In this case, the interleaving strategy corresponds to the 
round-robin scheduling algorithm.
This deterministic interleaving strategy is called cyclic interleaving 
in our earlier work on interleaving strategies (see e.g.~\cite{BM04c}).
In the current setting, an interleaving strategy is deterministic if, 
for all $n$, for all $h$, $s$, and $i$, 
$\sched{n}(h,s)(i) \in \set{0,1}$.
In the case that $S$ and $\updat{n}$ are as above, but $\sched{n}$ is 
defined by
\begin{ldispl}
\begin{gceqns}
\sched{n}(\emptyseq,s)(i) = 1\;
 & \mathrm{if} \;i = 1\;,  \\
\sched{n}(\emptyseq,s)(i) = 0\;
 & \mathrm{if} \;i \neq 1\;, \\
\sched{n}(h \concat \tup{j,n},s)(i) = 1 / n\;
 & \mathrm{if} \;i \leq n\;, \\
\sched{n}(h \concat \tup{i,n},s)(i) = 0\;
 & \mathrm{if} \;i > n\;,
\end{gceqns}
\end{ldispl}%
the interleaving strategy is a purely probabilistic one.
The probability distribution used is a uniform distribution.

More advanced strategies can be obtained if the scheduling makes use of 
the whole interleaving history and/or the control state.
For example, the individual lifetimes of the threads to be interleaved 
and their creation hierarchy can be taken into account by making use of 
the whole interleaving history.
Individual properties of the threads to be interleaved that depend on 
the actions performed by them can be taken into account by making use of 
the control state.
By doing so, interleaving strategies are obtained which, to a certain 
extent, can be affected by the threads to be interleaved.

Henceforth, we will write \prBTAnt\ for \prBTA\ extended with the 
forking postconditional composition operator.
The projective limit model of \prBTAnt\ is constructed like the 
projective limit model of \prBTA.
An outline of the projective limit model of \prBTAnt\ is given in 
Appendix~\ref{app-prTAsi}.

The following theorem concerns the question whether the operators added
to \prBTAnt\ are well axiomatized by the equations given in 
Table~\ref{axioms-strategic-interleaving} in the sense that these 
equations allow the projective limit model of \prBTAnt\ to be expanded 
to a projective limit model of \prTA.
\begin{theorem}
\label{theorem-si}
The operators added to \prBTAnt\ are well axiomatized, i.e.:
\begin{enumerate}
\item[(a)]
for all closed \prTA\ terms $t$, there exists a closed \prBTAnt\ term 
$t'$ such that $t = t'$ is derivable from the axioms of \prTA;
\item[(b)]
for all closed \prBTAnt\ terms $t$ and $t'$,
$t = t'$ is derivable from the axioms of \prBTAnt\ iff
$t = t'$ is derivable from the axioms of \prTA;
\item[(c)]
for all $m ,i\in \Natpos$ with $i \leq m$, $h \in \Hist$, $s \in S$, 
closed \prTA\ terms $t_1,\linebreak[2]\ldots,t_m$ and $n \in \Nat$, 
$\proj{n}(\siop{m}{h}{s}(t_1,\ldots,t_m)) =
 \proj{n}(\siop{m}{h}{s}(\proj{n}(t_1),\ldots,\proj{n}(t_m)))$ and
$\proj{n}(\posmop{m}{i}{h}{s}(t_1,\ldots,t_m)) =
 \proj{n}(\posmop{m}{i}{h}{s}(\proj{n}(t_1),\ldots,\proj{n}(t_m)))$ 
are derivable from the axioms of \prTA, the axioms for the operators 
$\proj{n}$ introduced in Theorem~\ref{theorem-use}, and
the following axiom:
\begin{ldispl}
\begin{geqns}
\proj{n+1}(\pcc{x}{\newthr{z}}{y}) = 
\pcc{\proj{n+1}(x)}{\newthr{\proj{n+1}(z)}}{\proj{n+1}(y)}\;,              
\end{geqns}
\end{ldispl}%
where $n$ stands for an arbitrary natural number from $\Nat$;
\item[(d)]
for all closed \prTA\ terms $t$ and $n \in \Nat$, 
$\proj{n}(\stdop(t)) = \proj{n}(\stdop(\proj{n}(t)))$ 
is derivable from the axioms of \prTA, the axioms for the operators 
$\proj{n}$ introduced in Theorem~\ref{theorem-use}, and the axiom
introduced in part~(c).
\end{enumerate}
\end{theorem}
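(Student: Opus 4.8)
The plan is to follow the same four-part structure as the proof of Theorem~\ref{theorem-use}, exploiting the parallel between the two situations: in both cases we add a family of operators to a probabilistic basic thread algebra together with a set of equations that "eliminate" these operators on closed terms, and then verify three consistency conditions (elimination, conservativity, and compatibility with projection/abstraction-like operators) that together license expanding the projective limit model. Concretely, for part~(a) the plan is to prove by induction on the structure of a closed \prTA\ term $t$ that there is a closed \prBTAnt\ term $t'$ with $t = t'$ derivable. The cases where $t$ is built by the \prBTAnt\ operators are immediate from the induction hypothesis; the interesting cases are when the head operator is $\siop{n}{h}{s}$, $\posmop{n}{i}{h}{s}$, or $\stdop$. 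For $\stdop$ one does a straightforward inner induction on the structure of its (already-eliminated) argument using DT1--DT5. For $\siop{n}{h}{s}$, axiom prSI1 rewrites it to a right-nested probabilistic composition of positional operators, so it suffices to handle $\posmop{n}{i}{h}{s}$; there one does an inner induction on the structure of the $i$th argument $t_i$ (brought to eliminated form first), applying prSI2/prSI4 when $t_i$ is $\DeadEnd$/$\Stop$ in the unary case, prSI3/prSI5 in the $(n{+}1)$-ary case, prSI6 when $t_i$ is a forking postconditional composition, prSI7 when it is an ordinary postconditional composition, and prSI8 when it is a probabilistic composition. The prSI8 case pushes the positional operator inside, strictly decreasing the structural measure of the relevant argument, so the inner induction is well-founded; prSI3, prSI5, prSI6, prSI7 all reduce to strategic interleaving operators applied to structurally smaller arguments, and the key point is that the \emph{total} size of the tuple of arguments decreases, so a suitable lexicographic/multiset measure makes the whole induction go through. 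This measure-design is the first place where care is needed.

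For part~(b) the argument is essentially verbatim the one in the proof of Theorem~\ref{theorem-use}: left-to-right is immediate because the \prBTAnt\ axioms are among the \prTA\ axioms, and right-to-left follows by observing that of the new axioms, only the "degenerate" ones whose left-hand side could match a closed \prBTAnt\ term are prSI2, prSI4, DT1, DT2 (and none of prSI1, prSI3, prSI5--prSI8, DT3--DT5 applies, since those require a $\siop{}{}{}$, $\posmop{}{}{}{}$, or $\stdop$ symbol actually present); each such application produces a term containing a new operator symbol to which only the same axiom, used in reverse, applies, so no new identifications between closed \prBTAnt\ terms are created. For parts~(c) and~(d), by part~(a) it suffices to establish the projection identities for closed \prBTAnt\ terms, and one proceeds by induction on term structure with a case split on whether $n = 0$ or $n > 0$, using throughout the easily proved idempotency fact $\proj{n}(t) = \proj{n}(\proj{n}(t))$ for closed \prBTAnt\ terms $t$ (this fact itself needs the new projection axiom for forking postconditional composition and for probabilistic composition). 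For part~(d) the $\stdop$ case uses DT1--DT5 together with the projection axioms, exactly mirroring how TA1--TA5 were used for abstraction in Theorem~\ref{theorem-use}, except that $\stdop$ does not decrease depth, so unlike the abstraction case we do get the clean identity $\proj{n}(\stdop(t)) = \proj{n}(\stdop(\proj{n}(t)))$ without having to pass to a larger index $k$.

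The main obstacle I anticipate is part~(c) for the strategic interleaving operators, specifically proving $\proj{n}(\siop{m}{h}{s}(t_1,\ldots,t_m)) = \proj{n}(\siop{m}{h}{s}(\proj{n}(t_1),\ldots,\proj{n}(t_m)))$. The difficulty is that unfolding $\siop{m}{h}{s}$ via prSI1 and then a positional operator via prSI6 \emph{increases} the number of threads (a fork adds a thread) and appends to the history, so a naive induction on $n$ alone does not obviously terminate, and one cannot simply induct on the structure of the arguments because the arguments themselves change under prSI3/prSI5/prSI7. The resolution I would pursue is to prove a sharper auxiliary statement by induction on $n$ with, as a nested induction, the multiset of structural sizes of $(t_1,\ldots,t_m)$ — noting that whenever the thread count goes up (prSI6), the action-prefix depth available before reaching depth $n{+}1$ goes down by one because a $\Tau$ is emitted, and whenever a thread terminates or deadlocks (prSI3, prSI5) the count goes down; the probabilistic-composition axiom prSI8 together with the new projection axiom $\proj{n+1}(\prc{x}{\pi}{y}) = \prc{\proj{n+1}(x)}{\pi}{\proj{n+1}(y)}$ lets probabilistic branching be pushed through transparently without affecting depth at all, and prSI1's right-nested probabilistic composition is handled the same way. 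Threading these observations into a single well-founded induction — and being careful that "$\proj{n}$ commutes with $\siop{}{}{}$" must be proved simultaneously with "$\proj{n}$ commutes with $\posmop{}{}{}{}$", since prSI1 reduces one to the other — is the real work; everything else is routine bookkeeping analogous to Theorem~\ref{theorem-use}.
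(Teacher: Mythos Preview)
Your proposal is correct and follows essentially the same approach as the paper's proof. The one place you make things harder than necessary is the well-foundedness concern in parts~(a) and~(c): the paper simply inducts on the \emph{sum of the lengths} of the argument terms $t_1,\ldots,t_m$ (with a case distinction $n=0$ versus $n>0$, not an outer induction on $n$), which already strictly decreases under every axiom including prSI6---even though the thread count grows, the forked-off subterm $x$ and the continuation $x'_i$ together are strictly smaller than $\pcc{x'_i}{\newthr{x}}{x''_i}$---so no lexicographic or multiset refinement, and no tracking of the projection depth, is needed. With that measure in hand the positional equation in~(c) is proved first by this single induction, and the strategic-interleaving equation then follows directly from prSI1 and the projection axiom for probabilistic composition, rather than requiring a genuinely simultaneous argument.
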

\begin{proof}
Part~(a) is straightforwardly proved by induction on the structure of 
$t$, and then in the case where $t$ is of the form 
$\posm{n}{i}{h}{s}{t_1,\ldots,t_n}$ by induction on the sum of the 
lengths of $t_1,\ldots,t_n$ and case distinction on the structure of 
$t_i$ and in the case where $t$ is of the form $\std{t_1}$ by induction 
on the structure of $t_1$.
The proof of the case where $t$ is of the form
$\posm{n}{i}{h}{s}{t_1,\ldots,t_n}$ reveals that occurrences of the 
forking postconditional composition operator get eliminated if $t$ is of
that form.

In the case of part~(b), the implication from left to right follows 
immediately from the fact that the axioms of \prBTAnt\ are included in 
the axioms of \prTA.
The implication from right to left is not difficult to see either. 
From the axioms of \prTA\ that are not axioms of \prBTAnt, only axioms 
prSI2, prSI4, DT1, and DT2 may be applicable to a closed \prBTAnt\ term 
$t$.
If one of them is applicable, then the application yields an equation 
$t = t'$ in which $t'$ is not a closed \prBTAnt\ term.
Moreover, only the axiom whose application yielded $t = t'$ is 
applicable to $t'$, but now in the opposite direction.
Hence, applications of axioms of \prTA\ that are not axioms of \prBTAnt\
do not yield additional equations.

By part~(a), it is sufficient to prove part~(c) for all closed \prBTAnt\ 
terms $t_1,\ldots,t_m$.
The derivability of the second equation is straightforwardly proved by 
induction on the sum of the lengths of $t_1,\ldots,t_n$ and case 
distinction on the structure of $t_i$, and in each case by case 
distinction between $n = 0$ and $n > 0$.
The derivability of the first equation now follows immediately using the 
axioms of the operators $\proj{n}$.
In the proofs, we repeatedly need the easy to prove fact that, for all 
closed \prBTAnt\ terms $t$ and $n \in \Nat$, 
$\proj{n}(t) = \proj{n}(\proj{n}(t))$ is derivable.

By part~(a), it is sufficient to prove part~(d) for all closed \prBTAnt\ 
terms $t$.
Part~(d) is easily proved by induction on the structure of $t$, and in 
each case by case distinction between $n = 0$ and $n > 0$.
In the proof, we need again the fact mentioned at the end of the proof
outline of part~(c).
\qed
\end{proof}

By Theorem~\ref{theorem-si}, we know that the projective limit model 
of \prBTAnt\ can be expanded to a projective limit model of \prTA.
An outline of this expansion is given in Appendix~\ref{app-prTAsi}.

\section{Concluding Remarks}
\label{sect-concl}

We have added probabilistic features to \BTA\ and its extensions with 
thread-service interaction and strategic interleaving.
Thus, we have paved the way for rigorous investigation of issues related
to probabilistic computation thinking in terms of instruction sequences
and rigorous investigation of probabilistic interleaving strategies.
As an example of the use of \prTSI, the probabilistic version of the 
extension of \BTA\ with thread-service interaction, we have added the 
most basic kind of probabilistic instructions proposed in~\cite{BM09f} 
to a program notation rooted in \PGA\ and have given a formal definition 
of the behaviours produced by the instruction sequences from the 
resulting program notation under excution with the help of \prTSI.

We enumerate neither the numerous issues relating to probabilistic 
computation in areas such as computability and complexity of 
computational problems, efficiency of algorithms, and verification of 
programs that could be investigated thinking in terms of instruction 
sequences nor the numerous probabilistic scheduling algorithms that 
could be investigated in \prTA, the probabilistic generalization of the 
extensions of \BTA\ with strategic interleaving.

However, we mention interesting options for future work that are of a 
different kind:
(a)~clarifying analyses of relevant probabilistic algorithms, such as 
the Miller-Rabin probabilistic primality test~\cite{Rab76a}, using 
probabilistic instruction sequences or non-probabilistic instruction 
sequences and probabilistic services and
(b)~explanations of relevant quantum algorithms, such as Shor's integer
factorization algorithm~\cite{Sho94a}, by first giving a clarifying
analysis using probabilistic instruction sequences or non-probabilistic 
instruction sequences and probabilistic services and then showing how 
certain services involved in principle can be realized very efficiently 
with quantum computing.

Moreover, we believe that the development of program notations for 
probabilistic computation is a useful preparation for the development of 
program notations for quantum computation later on.
The development of program notations for quantum computation that have 
their origins in instruction sequences could constitute a valuable 
complement to other developments with respect to quantum computation, 
which for the greater part boil down to mere adaptation of earlier 
developments with respect to classical computation to the potentialities 
of quantum physics (see e.g.~\cite{Gay06a}).

In fact, \prBTA\ is a process algebra tailored to the behaviours 
produced by probabilistic instruction sequences under execution.
Because \prBTA\ offers probabilistic choices of the generative variety 
(see~\cite{GSS95a}) and no non-deterministic choices, it is most closely 
related to the probabilistic process algebra prBPA presented 
in~\cite{BBS95a}.
To our knowledge, thread-service interaction and strategic interleaving 
as found in \prTSI\ and \prTA\ are mechanisms for interaction and 
concurrency that are quite different from those found in any theory or 
model of processes.
This leaves almost nothing to be said about related work.

The very limited extent of related work is due to two conscious choices:
(a)~the limitation of the scope to behaviours produced by programs under 
execution and
(b)~the limitation of the scope to the form of interleaving concurrency 
that is relevant to the behaviours of multi-threaded programs under 
execution.
However, something unexpected remains to be mentioned as related work, 
to wit the work on security of multi-threaded programs presented 
in~\cite{SS00a}.
Probabilistic strategic interleaving as found in \prTA\ is strongly 
inspired by the scheduler-dependent semantics of a simple programming 
language with support for multi-threading that we found in that paper.

It is noteworthy to mention something about the interpretation of 
\prBTA, \prTSI, and \prTA\ in a probabilistic version of a general 
process algebra such as \ACP, CCS or CSP.
It is crucial that probabilistic choice of the generative variety, 
non-deterministic choice, asynchronous parallel composition, abstraction 
from internal actions, and recursion are covered by the process algebra
used for the purpose of interpretation.
General process algebras that cover all this are rare.
To our knowledge, pACP$_\tau$~\cite{AG09a} is the only one that has been 
elaborated in sufficient depth.
However, interpretation of \prBTA, \prTSI, and \prTA\ in pACP$_\tau$ 
seems impossible to us.
The presence of asynchronous parallel composition based on arbitrary 
interleaving in pACP$_\tau$ precludes the proper form of abstraction 
from internal actions for interpretation of \prBTA, \prTSI, and \prTA. 

\subsection*{Acknowledgements}

We thank two anonymous referees for carefully reading a preliminary
version of this paper and for suggesting improvements of the
presentation of the paper.

\appendix

\section{Projective Limit Models}
\label{appendix}

In this appendix, we outline the construction of projective limit models 
for \prBTA, \prTSI, and \prTA.
In these model, which covers finite and infinite threads, threads are
represented by infinite sequences of finite approximations.
Guarded recursive specifications have unique solutions in these models.
We denote the interpretations of constants and operators in the models 
by the constants and operators themselves.

\subsection{Projective Limit Model of prBTA}
\label{app-prBTA}

We will write $\IMod{\prBTA}$ for the initial model of \prBTA\ and 
$\IThr{\prBTA}$ for the carrier of $\IMod{\prBTA}$.
$\IThr{\prBTA}$ consists of the equivalence classes of closed \prBTA\ 
terms with respect to derivable equality.
In other words, modulo derivable equality, $\IThr{\prBTA}$ is the set of
all closed \prBTA\ terms.
Henceforth, we will identify closed \prBTA\ terms with their equivalence
class where elements of $\IThr{\prBTA}$ are concerned.

Each element of $\IThr{\prBTA}$ represents a finite thread,  i.e.\ a
thread with a finite upper bound to the number of actions that it can
perform.
Below, we will construct a model that covers infinite threads as well.
In preparation for that, we define for all $n$ a function that cuts off
threads from $\IThr{\prBTA}$ after $n$ actions have been performed.

\sloppy
For each $n \in \Nat$, we define the \emph{projection} function
$\funct{\proj{n}}{\IThr{\prBTA}}{\IThr{\prBTA}}$, inductively as 
follows:
\begin{ldispl}
\begin{geqns}
\proj{0}(t) = \DeadEnd\;, 
\\
\proj{n+1}(\Stop)  = \Stop\;, 
\\
\proj{n+1}(\DeadEnd) = \DeadEnd\;, 
\\
\end{geqns}
\quad\;\;
\begin{geqns}
{} \\
\proj{n+1}(\pcc{t}{a}{t'}) = \pcc{\proj{n}(t)}{a}{\proj{n}(t')}\;,
\\
\proj{n+1}(\prc{t}{\pi}{t'}) = \prc{\proj{n+1}(t)}{\pi}{\proj{n+1}(t')}\;.
\end{geqns}
\end{ldispl}%
For $t \in \IThr{\prBTA}$, $\proj{n}(t)$ is called the $n$th projection
of $t$.
It can be thought of as an approximation of $t$.
If $\proj{n}(t) \neq t$, then $\proj{n+1}(t)$ can be thought of as the
closest better approximation of $t$.
If $\proj{n}(t) = t$, then $\proj{n+1}(t) = t$ as well.
For all $n \in \Nat$, we will write $\IThrn{n}{\prBTA}$ for
$\set{\proj{n}(t) \where t \in \IThr{\prBTA}}$.
Obviously, the projection functions defined above satisfy the axioms for 
the projection operators introduced in Theorem~\ref{theorem-use}.

In the projective limit model, which covers both finite and infinite
threads, threads are represented by \emph{projective sequences}, i.e.\ 
infinite se\-quences $\projseq{t_n}{n}$ of elements of $\IThr{\prBTA}$ 
such that $t_n \in \IThrn{n}{\prBTA}$ and $t_n = \proj{n}(t_{n+1})$ for 
all $n \in \Nat$.
In other words, a projective sequence is a sequence of which successive
components are successive projections of the same thread.
The idea is that any infinite thread is fully characterized by the
infinite sequence of all its finite approximations.
We will write $\PThr{\prBTA}$ for the set of all projective sequences 
over $\IThr{\prBTA}$, i.e.\ the set
\begin{ldispl}
\set{\projseq{t_n}{n} \where
     \LAND{n \in \Nat}{}
      (t_n \in \IThrn{n}{\prBTA} \Land t_n = \proj{n}(t_{n+1}))}\;.
\end{ldispl}%

The \emph{projective limit model} $\PMod{\prBTA}$ of \prBTA\ consists of 
the following:
\begin{itemize}
\item
the set $\PThr{\prBTA}$, the carrier of the projective limit model;
\item
an element of $\PThr{\prBTA}$ for each constant of \prBTA;
\item
an operation on $\PThr{\prBTA}$ for each operator of \prBTA;
\end{itemize}
where those elements of $\PThr{\prBTA}$ and operations on 
$\PThr{\prBTA}$ are defined as follows:
\begin{ldispl}
\begin{aeqns}
\Stop  & = & \projseq{\proj{n}(\Stop)}{n}\;,
\\
\DeadEnd & = & \projseq{\proj{n}(\DeadEnd)}{n}\;,
\\
\pcc{\projseq{t_n}{n}}{a}{\projseq{t'_n}{n}} & = &
\projseq{\proj{n}(\pcc{t_n}{a}{t'_n})}{n}\;,
\\
\prc{\projseq{t_n}{n}}{\pi}{\projseq{t'_n}{n}} & = &
\projseq{\proj{n}(\prc{t_n}{\pi}{t'_n})}{n}\;.
\end{aeqns}
\end{ldispl}%

\sloppy
It is straightforward to check that the constants are elements of
$\PThr{\prBTA}$ and the operations always yield elements of 
$\PThr{\prBTA}$.
It follows immediately from the construction of the projective limit
model of \prBTA\ that the axiom of \prBTA\ forms a complete 
axiomatization of this model for equations between closed terms.

\subsection{Projective Limit Model of prTA\boldmath{$_\mathrm{tsi}$}}
\label{app-prTAtsi}

We will write $\IMod{\SFA}$ for the free \SFA-extension of $\ServAlg$ 
and $\IMod{\prTSI}$ for the free \prTSI-extension of $\ServAlg$.

From the fact that the signatures of $\PMod{\prBTA}$ and $\IMod{\SFA}$
are disjoint, it follows, by the amalgamation result about expansions
presented as Theorem~6.1.1 in~\cite{Hod93a} (adapted to the many-sorted
case), that there exists a model of \prBTA\ combined with \SFA\ such 
that the restriction to the signature of \prBTA\ is $\PMod{\prBTA}$ and 
the restriction to the signature of \SFA\ is $\IMod{\SFA}$.

Let $\PMod{\prBTA{+}\SFA}$ be the model of \prBTA\ combined with \SFA\ 
referred to above.
Then the \emph{projective limit model} $\PMod{\prTSI}$ of \prTSI\ is
$\PMod{\prBTA{+}\SFA}$ expanded with the operations defined by
\begin{ldispl}
\begin{aeqns}
\projseq{t_n}{n} \sfuse S & = &
\projseq{\proj{n}(t_n \sfuse S)}{n}\;,
\\
\abstr{\Tau}(\projseq{t_n}{n}) & = &
\projseq{\lim_{k \to \infty} \proj{n}(\abstr{\Tau}(t_k))}{n}
\end{aeqns}
\end{ldispl}%
as interpretations of the additional operators of \prTSI.
On the right-hand side of these equations, the symbols $\sfuse$ and 
$\abstr{\Tau}$ denote the interpretation of the operators
$\sfuse$ and $\abstr{\Tau}$ in $\IMod{\prTSI}$.
In the second equation, the limit is the limit with respect to the 
discrete topology on $\IThr{\prBTA}$.

It is straightforward to check that the operations with which 
$\PMod{\prBTA}$ is expanded always yield elements of $\PThr{\prBTA}$.
It follows immediately from the construction of $\PMod{\prTSI}$ and 
Theorem~\ref{theorem-use} that $\PMod{\prTSI}$ is really a projective 
limit model of \prTSI.

\subsection{Projective Limit Model of prTA\boldmath{$_\mathrm{si}$}}
\label{app-prTAsi}

We will write $\IMod{\prBTAnt}$ for the initial model of \prBTAnt\ and 
$\IThr{\prBTAnt}$ for the carrier of $\IMod{\prBTAnt}$.
Moreover, we will write $\IMod{\prTA}$ for the initial model of \prTA.

With the projection functions $\proj{n}$ extended from $\IThr{\prBTA}$
to $\IThr{\prBTAnt}$ such that 
\begin{ldispl}
\proj{n+1}(\pcc{t}{\newthr{t''}}{t'}) = 
\pcc{\proj{n+1}(t)}{\newthr{\proj{n+1}(t'')}}{\proj{n+1}(t')}\;, 
\end{ldispl}%
the projective limit model $\PMod{\prBTAnt}$ of \prBTAnt\ is 
constructed from $\IMod{\prBTAnt}$ like the projective limit model 
$\PMod{\prBTA}$ of \prBTA\ is constructed from $\IMod{\prBTA}$.
The interpretation of the additional operator is the operation on 
$\PThr{\prBTAnt}$ defined as follows:    
\begin{ldispl}
\begin{aeqns}
\projseq{{t_1}_n}{n} \pccop{\newthr{\projseq{{t_2}_n}{n}}}
 \projseq{{t_3}_n}{n} & = &
\projseq{\proj{n}({t_1}_n \pccop{\newthr{{t_2}_n}} {t_3}_n)}{n}\;.
\end{aeqns}
\end{ldispl}%

The \emph{projective limit model} $\PMod{\prTA}$ of \prTA\ is 
$\PMod{\prBTAnt}$ expanded with the operations defined by
\begin{ldispl}
\begin{aeqns}
\siop{n}{h}{s}(\projseq{{t_1}_n}{n},\ldots,\projseq{{t_m}_n}{n}) & = &
\projseq{\proj{n}(\siop{n}{h}{s}({t_1}_n,\ldots,{t_m}_n))}{n}\;,
\\
\posmop{n}{i}{h}{s}(\projseq{{t_1}_n}{n},\ldots,\projseq{{t_m}_n}{n})
 & = &
\projseq{\proj{n}(\posmop{n}{i}{h}{s}({t_1}_n,\ldots,{t_m}_n))}{n}\;,
\\
\stdop(\projseq{t_n}{n}) & = &
\projseq{\proj{n}(\stdop(t_n))}{n}
\end{aeqns}
\end{ldispl}%
as interpretations of the additional operators of \prTA.
On the right-hand side of these equations, the symbols 
$\pccop{\newthr{\ph}}$, $\siop{n}{h}{s}$, $\posmop{n}{i}{h}{s}$, and
$\stdop$ denote the interpretation of the operators 
$\pccop{\newthr{\ph}}$, $\siop{n}{h}{s}$, $\posmop{n}{i}{h}{s}$, and
$\stdop$ in $\IMod{\prTA}$.

It is straightforward to check that the operations with which 
$\PMod{\prBTAnt}$ is expanded always yield elements of 
$\PThr{\prBTAnt}$.
It follows immediately from the construction of $\PMod{\prTA}$ and 
Theorem~\ref{theorem-si} that $\PMod{\prTA}$ is really a projective 
limit model of \prTA.

\bibliographystyle{splncs03}
\bibliography{IS}

\end{document}